\newtheorem{theorem}{Theorem}[section]
\newtheorem{conjecture}[theorem]{Conjecture}
\newtheorem{definition}[theorem]{Definition}
\newtheorem{lemma}[theorem]{Lemma}
\newtheorem{Main Result:}{Main Result:}
\newenvironment{proof}[1][Proof]{\textbf{#1.} }{\ \rule{0.5em}{0.5em}} %
\begin{document}

\title{{Proof of a conjecture about rotation symmetric functions}
\thanks{This work was supported by NSF of China with contract No.
60803154}
\author{Zhang Xiyong$^{1}$\footnote{\textbf{Corresponding E-mail Address:} xiyong.zhang@hotmail.com}, \ \
Guo Hua$^{2}$,\ \  Li Yifa$^{1}$ \\
{\small 1.Zhengzhou Information Science and Technology Institute, PO Box 1001-745, Zhengzhou 450002, PRC} \\
{\small 2.School of Computer Science and Engineering, Beihang
University, Beijing,100083, PRC.} }
\date{}}
\maketitle

\vspace{-0.4cm}

\begin{abstract}
Rotation symmetric Boolean functions have important applications
in the design of cryptographic algorithms. We prove the conjecture
about rotation symmetric Boolean functions (RSBFs) of degree 3
proposed in \cite{cusick}, thus the nonlinearity of such kind of
functions are determined.
\end{abstract}

\par \textbf{Keywords:}
{\textit{Boolean functions, Rotation-symmetric, Fourier Transform,
Nonlinearity}

\section{Introduction}
A Boolean function $f^n(x_0,\cdots,x_{n-1})$ on $n$ variables is a
map from $\mathbb{F}_2^n$ to $\mathbb{F}_2$, where
$\mathbb{F}_2^n$ is the vector space of dimension $n$ over the two
element field $\mathbb{F}_2$. Rotation symmetric Boolean functions
(Abbr. RSBFs) are a special kind of Boolean functions with
properties that its evaluations on every cyclic inputs are the
same, thus could be used as components to achieve efficient
implementation in the design of a message digest algorithm in
cryptography, such as MD4, MD5. These functions have attracted
attentions in these years (see [2-7]). One of the main focus is
the nonlinearity of these kind functions (see \cite{kavut,kim}).
It is known that a hashing algorithm employing degree-two RSBFs as
components cannot resist the linear and differential attacks
(\cite{marial}). Hence, it is necessary to use higher degree RSBFs
with higher nonlinearity to protect the cryptography algorithm
from differential attack. Cusick and St\u{a}nic\u{a}
(\cite{cusick}) investigated the weight of a kind of 3-degree
RSBFs and proposed a conjecture based on their numerical
observations.

\begin{conjecture}
The nonlinearity of $F_3^n(x_0,\cdots,x_{n-1})=\sum\limits_{0\leq
i\leq n-1}x_ix_{i+1(mod\ n)}x_{i+2(mod\ n)}$ is the same as its
weight.
\end{conjecture}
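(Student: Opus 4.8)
The plan is to argue entirely through the Walsh--Hadamard (Fourier) transform $W_f(a)=\sum_{x\in\mathbb{F}_2^n}(-1)^{f(x)+a\cdot x}$, where $f=F_3^n$. Recall that the weight satisfies $wt(f)=2^{n-1}-\tfrac12 W_f(0)$ while the nonlinearity satisfies $N_f=2^{n-1}-\tfrac12\max_{a}|W_f(a)|$. Since the constant function $0$ is affine, $N_f\le wt(f)$ holds automatically, so the conjecture reduces to two claims: (i) $W_f(0)>0$, and (ii) $|W_f(a)|\le W_f(0)$ for every $a\in\mathbb{F}_2^n$. In other words, the whole problem is to show that the Walsh spectrum attains its maximum modulus at the origin.

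First I would set up a transfer-matrix formula. Because $(-1)^{f(x)+a\cdot x}=\prod_{i=0}^{n-1}(-1)^{x_ix_{i+1}x_{i+2}+a_ix_i}$ and each factor depends only on the consecutive-pair state $s_i=(x_i,x_{i+1})$ together with the next bit, the cyclic sum collapses to $W_f(a)=\mathrm{Tr}\big(D^{a_0}M\,D^{a_1}M\cdots D^{a_{n-1}}M\big)$, where $M$ is the $4\times4$ matrix on states $(u,v)$ with $M_{(u,v),(v,w)}=(-1)^{uvw}$ (all admissible entries $+1$ except a single $-1$ at $(11,11)$), and $D=\mathrm{diag}(1,1,-1,-1)$ records the sign $(-1)^{u}$ coming from $a_ix_i$. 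In particular $W_f(0)=\mathrm{Tr}(M^n)$. For claim (i) I would compute the characteristic polynomial of $M$, which factors as $\lambda(\lambda^3-2\lambda-2)$; hence $t_n:=\mathrm{Tr}(M^n)$ obeys the linear recurrence $t_n=2t_{n-2}+2t_{n-3}$, and since the cubic has a dominant real root $\lambda_1\approx1.77$ strictly larger in modulus than its complex-conjugate pair $\lambda_2,\lambda_3$, we get $W_f(0)=\lambda_1^n+2\,\mathrm{Re}(\lambda_2^n)>0$ growing like $\lambda_1^n$. This simultaneously recovers the Cusick--St\u{a}nic\u{a} weight formula.

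The heart of the matter is claim (ii), and this is where I expect the main obstacle. The naive entrywise estimate $|\mathrm{Tr}(\prod_i D^{a_i}M)|\le\mathrm{Tr}(|M|^n)$ is too weak, because the lone negative entry of $M$ makes $\mathrm{Tr}(|M|^n)$ strictly exceed $W_f(0)=\mathrm{Tr}(M^n)$; the cancellations carried by the $D$-factors must be exploited rather than discarded. My approach would be to use the rotation symmetry (which makes $W_f$ constant on the cyclic orbit of $a$) to reduce $a$ to a canonical representative described by its maximal runs of consecutive $1$'s and the intervening gaps, and then to track how inserting a block of $D$'s into the product $M^n$ transforms the affected subproduct of transfer matrices. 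The goal is to show that every such insertion can only weakly decrease the trace, so that the extremal $a$ has a simple structure amenable to a direct eigenvalue comparison, with an induction on $n$ through the recurrence of Step (i) closing the general case.

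I anticipate the delicate points to be the interaction between the single sign in $M$ and the signs in $D$ for short runs, together with a small case split (for example according to the residue of $n$ modulo a fixed integer) needed to pin down the lower-order terms in $t_n$. A pure joint-spectral-radius estimate would give the correct growth rate $\lambda_1^n$ but not the exact inequality $|W_f(a)|\le W_f(0)$, so a finer structural analysis of the matrix products is unavoidable; combining that bound with Step (i) then yields $\max_a|W_f(a)|=W_f(0)$ and hence $N_{F_3^n}=wt(F_3^n)$.
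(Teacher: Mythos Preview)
Your transfer-matrix reformulation is correct and natural: writing $W_f(a)=\mathrm{Tr}\big(\prod_i D^{a_i}M\big)$ with the $4\times4$ adjacency matrix $M$ and the sign matrix $D$ is exactly the right encoding, and the characteristic polynomial $\lambda(\lambda^3-2\lambda-2)$ does yield the recurrence $W_f(0)=2W_f^{(n-2)}(0)+2W_f^{(n-3)}(0)$ together with positivity. That part is fine and in fact reproduces the paper's Lemma~2.3.

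The gap is in claim (ii). Your proposed mechanism---``show that every insertion of a $D$-block can only weakly decrease the trace''---is not an argument but a restatement of the goal, and as stated it is not even true at the level of individual insertions: $D$ is orthogonal, so conjugating or interleaving by $D$ preserves all singular values, and there is no entrywise or norm-based monotonicity to exploit. What one actually needs is a global cancellation estimate comparing $\mathrm{Tr}\big(\prod_i D^{a_i}M\big)$ to $\mathrm{Tr}(M^n)$, and nothing in the proposal explains how the run-length decomposition of $a$ would produce that. The honest admissions (``this is where I expect the main obstacle'', ``a finer structural analysis is unavoidable'') are accurate: the plan stops precisely where the work begins.

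For comparison, the paper does not attempt any monotonicity along paths in $a$-space. Instead it cuts the cycle open and tracks four boundary-condition variants $f_0,f_1,f_2,f_3$ of the open-chain sum (these are exactly the four row/column sectors of your transfer matrix). It then derives explicit recurrences for each $\widehat{f_i^n}(c)$ in terms of $\widehat{f_j^{n-k}}$ with $k\in\{1,\dots,5\}$, the form of the recurrence depending on the last few bits of $c$. The key quantitative step is the uniform bound $|\widehat{f_i^n}(c)|<\tfrac14\widehat{F_3^{n+2}}(0)$ for nonzero $c$, proved by a simultaneous induction on $n$ over all four $i$ and all $c$, with base cases $n\le 9$ checked by computer. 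Summing the four pieces gives $|\widehat{F_3^n}(c)|<\widehat{F_3^n}(0)$. If you try to push your transfer-matrix approach through, you will almost certainly be forced into the same kind of boundary-sector decomposition and case-by-case induction on the trailing bits of $a$; the paper's argument is essentially what a successful execution of your outline looks like once the vague monotonicity heuristic is replaced by hard recursive inequalities.
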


As claimed in \cite{cusick} that if the above Conjecture could be
proved, then significant progress for $k$-degree ($k>3$) RSBFs
might be possible. Recently Ciungu \cite{ciungu} proved the
conjecture in the case $3|n$. In this paper, we factor $F_3^n$
into four sub-functions, discover some recurrence relations, and
thus prove the above Conjecture. The sub-functions and related
recurrence are different from Cusick's\cite{cusick}. The technique
used in this paper may be applied for the study of RSBFs of degree
$k>3$.

We define two vectors $e_1=(1,0,\cdots,0)\in \mathbb{F}_2^n$ for
every $n>1$, $e_{2^{n-1}}=(0,0,\cdots,0,1)\in \mathbb{F}_2^n$, and
abuse $0=(0,\cdots,0)$ to represent the zero vector in vector
spaces $\mathbb{F}_2^n$ of every dimension for simpleness. By
$x^n$ and $c^n$ we mean the abbr. forms of vectors
$(x_0,\cdots,x_{n-1})$ and $(c_0,\cdots,c_{n-1})$ in
$\mathbb{F}_2^n$. A linear function is of the form $c^n\cdot x^n$,
where $\cdot$ is the vector dot product. The \textit{weight} of a
Boolean function $f^n(x^n)$ is the number of solutions $x^n\in
\mathbb{F}_2^n$ such that $f^n(x^n)=1$, denoted by $wt(f^n)$. The
distance $d(f^n,g^n)$ between two Boolean functions $f^n$ and
$g^n$ is defined to be $wt(f^n+g^n)$.

Now we list some basic definitions about Boolean functions.

\begin{definition}\label{rots}
A Boolean function $f^n(x^n)$, is called rotation symmetric if
$$f^n(x_0,\cdots,x_{n-1})=f^n(x_{n-1},x_0,x_1,\cdots,x_{n-2}),\ for\ all\ (x_0,\cdots,x_{n-1})\in \mathbb{F}_2^n.$$
\end{definition}

\begin{definition}\label{fourier}
For a Boolean function $f^n(x^n)$, the Fourier transform of $f^n$
at $c^n\in  \mathbb{F}_2^n$ is defined as
$$\widehat{f^n}(c^n)=\sum\limits_{x^n\in \mathbb{F}_2^n}(-1)^{f^n(x^n)+c^n\cdot x^n}.$$
\end{definition}

By the definition of Fourier transform, it is easy to see that
\begin{lemma} For all $(c_0,\cdots,c_{n-1})\in \mathbb{F}_2^n$,
$$\widehat{F_3^n}(c_0,\cdots,c_{n-1})=\widehat{F_3^n}(c_{n-1},c_0,\cdots,c_{n-2}).$$
\end{lemma}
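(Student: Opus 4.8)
The plan is to exploit the fact that $F_3^n$ is itself rotation symmetric and then perform a change of variables in the Fourier sum by the cyclic shift, which is a bijection of $\mathbb{F}_2^n$. First I would record that $F_3^n(x_0,\cdots,x_{n-1})=\sum_{0\le i\le n-1}x_ix_{i+1}x_{i+2}$ (indices mod $n$) satisfies Definition \ref{rots}: applying the cyclic shift $\rho(x_0,\cdots,x_{n-1})=(x_{n-1},x_0,\cdots,x_{n-2})$ merely relabels $x_i\mapsto x_{i-1(\mathrm{mod}\ n)}$, and this permutes the $n$ summands $x_ix_{i+1}x_{i+2}$ among themselves, so $F_3^n(\rho(x^n))=F_3^n(x^n)$.

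Next, writing $c'=(c_{n-1},c_0,\cdots,c_{n-2})$, I would expand the right-hand side by Definition \ref{fourier},
$$\widehat{F_3^n}(c')=\sum_{x^n\in\mathbb{F}_2^n}(-1)^{F_3^n(x^n)+c'\cdot x^n},$$
and substitute $x^n=\rho(y^n)$. Since $\rho$ is a bijection of $\mathbb{F}_2^n$, as $y^n$ ranges over $\mathbb{F}_2^n$ so does $x^n$, hence the value of the sum is unchanged after this reindexing.

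The two exponent contributions then transform cleanly. By the rotation symmetry established above, $F_3^n(\rho(y^n))=F_3^n(y^n)$. For the linear part, writing $c'_i=c_{i-1(\mathrm{mod}\ n)}$ and $x_i=y_{i-1(\mathrm{mod}\ n)}$ gives $c'\cdot x^n=\sum_i c_{i-1}y_{i-1}=\sum_j c_jy_j=c^n\cdot y^n$. Combining these,
$$\widehat{F_3^n}(c')=\sum_{y^n\in\mathbb{F}_2^n}(-1)^{F_3^n(y^n)+c^n\cdot y^n}=\widehat{F_3^n}(c^n),$$
which is exactly the claimed identity.

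I do not expect a serious obstacle here; the only point requiring care is to keep the index shifts modulo $n$ consistent, so that the nonlinear part (via the rotation symmetry of $F_3^n$) and the linear part $c'\cdot x^n$ reindex in the same way under $\rho$. In essence the lemma simply records that the Fourier transform of a rotation symmetric function is again rotation symmetric in the frequency variable $c^n$.
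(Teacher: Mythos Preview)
Your argument is correct and is exactly the straightforward verification the paper has in mind: the paper does not give a proof but simply remarks that the lemma follows ``by the definition of Fourier transform,'' and your cyclic change of variables $x^n=\rho(y^n)$ together with the rotation symmetry of $F_3^n$ is precisely that computation.
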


\begin{definition}\label{nonlinearity}
The nonlinearity $N_{f^n}$ of a Boolean function $f^n(x^n)$, is
defined as
$$N_{f^n}=Min\ \{d(f^n(x^n),c^n\cdot x^n)|c^n\in  \mathbb{F}_2^n\}.$$
\end{definition}

By Definition \ref{nonlinearity}, it is not difficult to deduce
that for all $f^n(x^n)$,
$$\widehat{f^n}(0)=2^n-2\cdot wt(f^n(x^n)).$$
\noindent Hence we can restate the above Conjecture as
$$\widehat{F_3^n}(0)=Max\{ |\widehat{F_3^n}(c^n)| \ | c^n\in  \mathbb{F}_2^n\}.$$

\section{The proof of the Conjecture}
To prove the above Conjecture, we factor $F_3^n$ into 4
sub-functions. Let  $t_{n}=\sum\limits_{0\leq i\leq
n-3}x_ix_{i+1}x_{i+2}$, and
\begin{equation}
\begin{array}{ll}
&f_0^{n}(x_0,\cdots,x_{n-1})=t_n,\\
&f_1^{n}(x_0,\cdots,x_{n-1})=t_n+x_0x_1,\\
&f_2^{n}(x_0,\cdots,x_{n-1})=t_n+x_{n-2}x_{n-1},\\
&f_3^{n}(x_0,\cdots,x_{n-1})=t_n+x_0x_1+x_{n-2}x_{n-1}+x_0+x_{n-1}.
\end{array}
\end{equation}
Then we have
$$\sum\limits_{x_0,\cdots,x_{n-1}}(-1)^{F_3^n(x_0,\cdots,x_{n-1})}=\sum\limits_{x_0,\cdots,x_{n-3}}\sum\limits_{0\leq i\leq 3}(-1)^{f_i^{n-2}(x_0,\cdots,x_{n-3})}.$$

\begin{lemma}\label{high0}
For every $c^n=(c_0,\cdots,c_{n-1})\in \mathbb{F}_2^n$, if
$c_{n-1}=0$, then
\begin{equation}
\begin{array}{ll}
&\widehat{f_0^n}(c^n)=2(\widehat{f_0^{n-2}}(c^{n-2})+(-1)^{c_{n-2}}\cdot\widehat{f_0^{n-3}}(c^{n-3})),\\

&\widehat{f_1^n}(c^n)=2(\widehat{f_1^{n-2}}(c^{n-2})+(-1)^{c_{n-2}}\cdot\widehat{f_1^{n-3}}(c^{n-3})),\\

&\widehat{f_2^n}(c^n)=2(\widehat{f_0^{n-2}}(c^{n-2})+(-1)^{c_{n-3}+c_{n-2}}\cdot\widehat{f_2^{n-3}}(c^{n-3}+e_{2^{n-4}})),\\

&\widehat{f_3^n}(c^n)=2(-1)^{c_{n-2}}\cdot
\widehat{f_1^{n-3}}(c^{n-3}+e_1),

\end{array}
\end{equation}
where $c^{n-2}\in \mathbb{F}_2^{n-2}$ and $c^{n-3}\in
\mathbb{F}_2^{n-3}$ are the first $n-2$ and $n-3$ bits of $c^n\in
\mathbb{F}_2^{n}$, and
$e_1=(1,0,\cdots,0),e_{2^{n-4}}=(0,\cdots,0,1)\in
\mathbb{F}_2^{n-3}$.
\end{lemma}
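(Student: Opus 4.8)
The plan is to evaluate each Fourier sum directly by peeling off the two highest-index coordinates $x_{n-1}$ and $x_{n-2}$ and recognizing the residual sums as Fourier transforms on $n-2$ and $n-3$ variables. The structural fact that makes this work is that in $t_n=\sum_{0\le i\le n-3}x_ix_{i+1}x_{i+2}$ the variable $x_{n-1}$ occurs in exactly one cubic monomial, $x_{n-3}x_{n-2}x_{n-1}$, and $x_{n-2}$ in exactly two, so that $t_n=t_{n-2}+x_{n-4}x_{n-3}x_{n-2}+x_{n-3}x_{n-2}x_{n-1}$, where $t_{n-2}$ involves only $x_0,\dots,x_{n-3}$. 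I would record this decomposition once and reuse it for all four functions, noting also which boundary monomials ($x_0x_1$, $x_{n-2}x_{n-1}$, $x_0$, $x_{n-1}$) each $f_i^n$ adds.

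First I would invoke the hypothesis $c_{n-1}=0$ to delete $x_{n-1}$ from the linear part $c^n\cdot x^n$, so that $x_{n-1}$ enters the exponent only through the above cubic monomial and, for $f_2^n$ and $f_3^n$, through the boundary terms $x_{n-2}x_{n-1}$ and $x_{n-1}$. Summing over $x_{n-1}\in\mathbb{F}_2$ then contracts its contribution to a factor $1+(-1)^{L}$, where $L$ is the coefficient of $x_{n-1}$ in the exponent; this equals $2$ when $L=0$ and vanishes otherwise. This is simultaneously the origin of the global factor $2$ in every identity and of a constraint tying $x_{n-2}$ to $x_{n-3}$.

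Next I would split the remaining sum on $x_{n-2}\in\{0,1\}$. For $f_0^n$ and $f_1^n$ the branch $x_{n-2}=0$ annihilates every boundary monomial and leaves $t_{n-2}$ (resp. $t_{n-2}+x_0x_1$) paired with $c^{n-2}$, which is exactly $\widehat{f_0^{n-2}}(c^{n-2})$ (resp. $\widehat{f_1^{n-2}}(c^{n-2})$); the branch $x_{n-2}=1$ forces $x_{n-3}=0$ via the constraint $L=0$, collapsing $t_{n-2}$ to $t_{n-3}$ and producing $(-1)^{c_{n-2}}\widehat{f_0^{n-3}}(c^{n-3})$ (resp. $(-1)^{c_{n-2}}\widehat{f_1^{n-3}}(c^{n-3})$). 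A point to watch already here is that for $f_2^n$ the added quadratic $x_{n-2}x_{n-1}$ sits on the two variables being summed out, so it simply drops in the $x_{n-2}=0$ branch and the first term is $\widehat{f_0^{n-2}}$, not $\widehat{f_2^{n-2}}$.

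The step I expect to be the main obstacle is tracking how the boundary quadratic and linear terms of $f_2^n$ and $f_3^n$ survive the forced substitutions and then absorbing the leftover degree-one monomials into the character. For $f_2^n$ the surviving branch is $x_{n-2}=1,\,x_{n-3}=1$, which turns the boundary cubic of $t_{n-2}$ back into a quadratic $x_{n-5}x_{n-4}$, identifying the residual sum with $f_2^{n-3}$, while a leftover $x_{n-4}$ flips the last coordinate of $c^{n-3}$, i.e. shifts the evaluation point by $e_{2^{n-4}}$. For $f_3^n$ the coefficient $L$ vanishes only at $x_{n-2}=1,\,x_{n-3}=0$, explaining why a single term survives, and the leftover $x_0$ shifts the point by $e_1$, yielding $(-1)^{c_{n-2}}\widehat{f_1^{n-3}}(c^{n-3}+e_1)$. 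Verifying that the shifts are exactly $e_1$ and $e_{2^{n-4}}$ and that the residual cubic part is precisely $t_{n-3}$ or the intended sub-function is the crux; the remaining summations are routine. Finally I would treat small $n$ separately to ensure the boundary indices $0,1$ and $n-3,n-2,n-1$ do not collide.
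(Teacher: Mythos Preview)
Your proposal is correct and is essentially the paper's own approach: both proofs compute the Fourier sum directly by peeling off the top coordinates $x_{n-1},x_{n-2},x_{n-3}$ and identifying the residual sums as lower-dimensional transforms. Your use of the factor $1+(-1)^{L}$ after summing over $x_{n-1}$ organizes the case split more cleanly than the paper's explicit eight-branch expansion, but the underlying computation and the identification of the shifts by $e_1$ and $e_{2^{n-4}}$ are the same.
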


\begin{proof}
We prove the first relation,  proof of the other three  ones are
similar. Because $c_{n-1}=0$, we have

\begin{equation*}
\begin{array}{ll}
&\widehat{f_0^n}(c^n)\\

&=\sum\limits_{x^n:x_{n-1}=0}(-1)^{f_0^n(x^n)+c^n\cdot
x^n}+\sum\limits_{x^n:x_{n-1}=1}(-1)^{f_0^n(x^n)+c^n\cdot x^n}\\

&=\sum\limits_{x^{n-1}}(-1)^{f_0^{n-1}(x^{n-1})+c^{n-1}\cdot
x^{n-1}}+\sum\limits_{x^{n-1}}(-1)^{f_0^{n-1}(x^{n-1})+x_{n-3}x_{n-2}+c^{n-1}\cdot x^{n-1}}\\

&=\sum\limits_{x^{n-1}:x_{n-2}=0}(-1)^{f_0^{n-1}(x^{n-1})+c^{n-1}\cdot
x^{n-1}}+\sum\limits_{x^{n-1}:x_{n-2}=0}(-1)^{f_0^{n-1}(x^{n-1})+x_{n-3}x_{n-2}+c^{n-1}\cdot x^{n-1}}\\
&\ \ \ \ \
+\sum\limits_{x^{n-1}:x_{n-2}=1}(-1)^{f_0^{n-1}(x^{n-1})+c^{n-1}\cdot
x^{n-1}}+\sum\limits_{x^{n-1}:x_{n-2}=1}(-1)^{f_0^{n-1}(x^{n-1})+x_{n-3}x_{n-2}+c^{n-1}\cdot x^{n-1}}\\
\end{array}
\end{equation*}

\begin{equation*}
\begin{array}{ll}

&=\sum\limits_{x^{n-2}}(-1)^{f_0^{n-2}(x^{n-2})+c^{n-2}\cdot
x^{n-2}}+\sum\limits_{x^{n-2}}(-1)^{f_0^{n-2}(x^{n-2})+c^{n-2}\cdot x^{n-2}}\\
&\ \ \ \ \
+\sum\limits_{x^{n-2}}(-1)^{f_0^{n-2}(x^{n-2})+c^{n-2}\cdot
x^{n-2}+x_{n-4}x_{n-3}+c_{n-2}}\\
&\ \ \ \ \
+\sum\limits_{x^{n-2}}(-1)^{f_0^{n-2}(x^{n-2})+c^{n-2}\cdot x^{n-2}+x_{n-4}x_{n-3}+x_{n-3}+c_{n-2}}\\

&=2\cdot \widehat{f_0^{n-2}}(c^{n-2})\\
&\ \ \ \ \
+\sum\limits_{x^{n-2}:x_{n-3}=0}(-1)^{f_0^{n-2}(x^{n-2})+c^{n-2}\cdot
x^{n-2}+x_{n-4}x_{n-3}+c_{n-2}}\\
&\ \ \ \ \
+\sum\limits_{x^{n-2}:x_{n-3}=1}(-1)^{f_0^{n-2}(x^{n-2})+c^{n-2}\cdot
x^{n-2}+x_{n-4}x_{n-3}+c_{n-2}}\\
&\ \ \ \ \
+\sum\limits_{x^{n-2}:x_{n-3}=0}(-1)^{f_0^{n-2}(x^{n-2})+c^{n-2}\cdot x^{n-2}+x_{n-4}x_{n-3}+x_{n-3}+c_{n-2}}\\
&\ \ \ \ \
+\sum\limits_{x^{n-2}:x_{n-3}=1}(-1)^{f_0^{n-2}(x^{n-2})+c^{n-2}\cdot
x^{n-2}+x_{n-4}x_{n-3}+x_{n-3}+c_{n-2}}\\

&=2\cdot \widehat{f_0^{n-2}}(c^{n-2})\\
&\ \ \ \ \
+\sum\limits_{x^{n-3}}(-1)^{f_0^{n-3}(x^{n-3})+c^{n-3}\cdot
x^{n-3}+c_{n-2}}\\
&\ \ \ \ \
+\sum\limits_{x^{n-3}}(-1)^{f_0^{n-3}(x^{n-3})+c^{n-3}\cdot x^{n-3}+c_{n-2}}\\
&\ \ \ \ \
+\sum\limits_{x^{n-3}}(-1)^{f_0^{n-3}(x^{n-3})+c^{n-3}\cdot
x^{n-3}+x_{n-5}x_{n-4}+x_{n-4}+c_{n-3}+c_{n-2}}\\
&\ \ \ \ \
+\sum\limits_{x^{n-3}}(-1)^{f_0^{n-3}(x^{n-3})+c^{n-3}\cdot
x^{n-3}+x_{n-5}x_{n-4}+x_{n-4}+c_{n-3}+c_{n-2}+1}\\

&=2\cdot \widehat{f_0^{n-2}}(c^{n-2})+2\cdot (-1)^{c_{n-2}}\cdot \widehat{f_0^{n-3}}(c^{n-3})\\
&\ \ \ \ \
+\sum\limits_{x^{n-3}}(-1)^{f_0^{n-3}(x^{n-3})+c^{n-3}\cdot
x^{n-3}+x_{n-5}x_{n-4}+x_{n-4}+c_{n-3}+c_{n-2}}\\
&\ \ \ \ \
-\sum\limits_{x^{n-3}}(-1)^{f_0^{n-3}(x^{n-3})+c^{n-3}\cdot
x^{n-3}+x_{n-5}x_{n-4}+x_{n-4}+c_{n-3}+c_{n-2}}\\

&=2\cdot \widehat{f_0^{n-2}}(c^{n-2})+2\cdot (-1)^{c_{n-2}}\cdot
\widehat{f_0^{n-3}}(c^{n-3}).
\end{array}
\end{equation*}

\end{proof}

\begin{lemma}\label{high1}
For every $c^n=(c_0,\cdots,c_{n-1})\in \mathbb{F}_2^n$, if
$c_{n-1}=1$, then for $i=0, 2$,
\begin{equation}
\begin{array}{ll}
\widehat{f_i^n}(c^n)
&=\widehat{f_0^{n-1}}(c^{n-1})\pm2\cdot\widehat{f_0^{n-4}}(c^{n-4})), \\
\ \ \
or&=\widehat{f_0^{n-1}}(c^{n-1})\pm2\cdot\widehat{f_0^{n-4}}(c^{n-4})\pm4\cdot\widehat{f_2^{n-5}}(c^{n-5}),
 \\
\ \ \
or&=\widehat{f_0^{n-1}}(c^{n-1})\pm2\cdot\widehat{f_0^{n-4}}(c^{n-4})\pm4\cdot\widehat{f_2^{n-5}}(c^{n-5}+e_{2^{n-6}}),

\end{array}
\end{equation}

and for $i=1$,
\begin{equation}
\begin{array}{ll}
\widehat{f_i^n}(c^n)
&=\widehat{f_1^{n-1}}(c^{n-1})\pm2\cdot\widehat{f_1^{n-4}}(c^{n-4}),\\
\ \ \
or&=\widehat{f_1^{n-1}}(c^{n-1})\pm2\cdot\widehat{f_1^{n-4}}(c^{n-4})\pm4\cdot\widehat{f_1^{n-5}}(c^{n-5}),
\\
\ \ \
or&=\widehat{f_1^{n-1}}(c^{n-1})\pm2\cdot\widehat{f_1^{n-4}}(c^{n-4})\pm4\cdot\widehat{f_3^{n-5}}(c^{n-5}+e_{1}),
\end{array}
\end{equation}

and for $i=3$,
\begin{equation}
\begin{array}{ll}
\widehat{f_i^n}(c^n)
&=\widehat{f_1^{n-1}}(c^{n-1}+e_{1})\pm2\cdot\widehat{f_1^{n-4}}(c^{n-4}+e_{1}),\\
\ \ \
or&=\widehat{f_1^{n-1}}(c^{n-1}+e_{1})\pm2\cdot\widehat{f_1^{n-4}}(c^{n-4}+e_{1})\pm4\cdot\widehat{f_1^{n-5}}(c^{n-5}+e_{1}),
\\
\ \ \
or&=\widehat{f_1^{n-1}}(c^{n-1}+e_{1})\pm2\cdot\widehat{f_1^{n-4}}(c^{n-4}+e_{1})\pm4\cdot\widehat{f_3^{n-5}}(c^{n-5}),

\end{array}
\end{equation}

where $c^{n-1}\in \mathbb{F}_2^{n-1}$, $c^{n-4}\in
\mathbb{F}_2^{n-4}$, and $c^{n-5}\in \mathbb{F}_2^{n-5}$ are the
first $n-1$ , $n-4$ and $n-5$ bits of $c^n\in \mathbb{F}_2^{n}$,
and $e_1=(1,0,\cdots,0),e_{2^{n-6}}=(0,\cdots,0,1)\in
\mathbb{F}_2^{n-5}$.
\end{lemma}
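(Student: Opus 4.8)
The plan is to mirror the variable-elimination technique of Lemma \ref{high0}, but to track the sign alternation forced by $c_{n-1}=1$; this alternation is exactly what splits the answer into the three alternative forms. I will carry out the argument in detail for $i=0$ and indicate how the remaining indices follow by the same bookkeeping.

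First I would peel off the top variable $x_{n-1}$. Writing $t_n=t_{n-1}+x_{n-3}x_{n-2}x_{n-1}$ and splitting the Fourier sum on $x_{n-1}\in\{0,1\}$, the $x_{n-1}=0$ half reproduces $\widehat{f_0^{n-1}}(c^{n-1})$, while on the $x_{n-1}=1$ half the factor $(-1)^{c_{n-1}x_{n-1}}=-1$ appears and the residual cubic collapses to the quadratic $x_{n-3}x_{n-2}$, so that $t_{n-1}+x_{n-3}x_{n-2}=f_2^{n-1}$. This yields the clean identity $\widehat{f_0^n}(c^n)=\widehat{f_0^{n-1}}(c^{n-1})-\widehat{f_2^{n-1}}(c^{n-1})$, which already isolates the anchor term $\widehat{f_0^{n-1}}(c^{n-1})$ common to all three branches. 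The analogous first splits give $\widehat{f_2^n}(c^n)=\widehat{f_0^{n-1}}(c^{n-1})-\widehat{f_2^{n-1}}(c^{n-1}+e_{2^{n-2}})$ and, for $i=1,3$, the same picture with the standing quadratic $x_0x_1$ (and, for $f_3$, the linear terms $x_0,x_{n-1}$) riding along, which is what promotes every $f_0$ into an $f_1$ and shifts arguments by $e_1$.

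Next I would reduce the residual $\widehat{f_2^{n-1}}(c^{n-1})$ by successively splitting on $x_{n-2},x_{n-3},x_{n-4}$, using $t_{n-1}=t_{n-4}+x_{n-6}x_{n-5}x_{n-4}+x_{n-5}x_{n-4}x_{n-3}+x_{n-4}x_{n-3}x_{n-2}$ together with the identities $t_{n-4}=f_0^{n-4}$ and $t_{n-4}+x_{n-6}x_{n-5}=f_2^{n-4}$. Pinning these three variables, summing out $x_{n-2}$ forces the parity condition $x_{n-3}(x_{n-4}+1)\equiv c_{n-2}$, and this is the source of the case split: when $c_{n-2}=1$ only one pinned pattern survives and the recursion closes already at level $n-4$ with a factor $2$, giving the first form $\widehat{f_0^{n-1}}(c^{n-1})\pm2\widehat{f_0^{n-4}}(c^{n-4})$; when $c_{n-2}=0$ three patterns survive, the surviving cubic leaves a bilinear coupling in $x_{n-5}$, one further split on $x_{n-5}$ is forced, and the value of $c_{n-3}$ decides whether the survivor carries the plain argument or the shifted one, producing the remaining two forms with the factor $4$ and the argument shift $e_{2^{n-6}}$.

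Finally I would record the sign and argument bookkeeping that distinguishes the indices: for $i=2$ the extra top quadratic $x_{n-2}x_{n-1}$ only flips $c_{n-2}$ in the residual's argument, which permutes the cases but reproduces the same menu, so $i=0$ and $i=2$ share the three forms; for $i=1$ the standing $x_0x_1$ converts each $\widehat{f_0}$ into $\widehat{f_1}$ (and the deepest survivor into $\widehat{f_3}$), while for $i=3$ the linear terms cancel the sign coming from $c_{n-1}=1$ and contribute the $e_1$ shifts. The main obstacle is precisely this combinatorial accounting in the middle step: one must verify, for each admissible pinned pattern, exactly which boundary monomials survive, how the accumulated constants $(-1)^{c_{n-2}},(-1)^{c_{n-3}},(-1)^{c_{n-4}}$ assemble into the stated $\pm$ signs, and whether a leftover linear term flips $c_{n-6}$ (giving $e_{2^{n-6}}$) or $c_0$ (giving $e_1$); keeping this matching consistent across all four sub-functions is the delicate part, whereas each individual elimination is the same routine computation already performed in Lemma \ref{high0}.
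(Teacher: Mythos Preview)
Your proposal is correct and follows essentially the same route as the paper. The paper also peels off $x_{n-1}$ first (obtaining the anchor $\widehat{f_0^{n-1}}(c^{n-1})$), then eliminates $x_{n-2},x_{n-3},x_{n-4}$ simultaneously by tabulating the eight functions $g_{i,j}^{n-4}$, and the resulting case split on $c_{n-2},c_{n-3}$ is exactly your three branches; your intermediate identity $\widehat{f_0^n}(c^n)=\widehat{f_0^{n-1}}(c^{n-1})-\widehat{f_2^{n-1}}(c^{n-1})$ is just a cleaner way of packaging the same computation before the three-variable split.
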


\begin{proof}
We briefly prove the relations for $f_0^n,f_2^n$.

Because $c_{n-1}=1$, we have
\begin{equation}
\begin{array}{ll}
&\widehat{f_0^n}(c^n)\\

&=\sum\limits_{x^n:x_{n-1}=0}(-1)^{f_0^n(x^n)+c^n\cdot
x^n}+\sum\limits_{x^n:x_{n-1}=1}(-1)^{f_0^n(x^n)+c^n\cdot x^n}\\

&=\widehat{f_0^{n-1}}(c^{n-1})+\sum\limits_{0\leq j\leq
7}(-1)^{g_{0,j}^{n-4}}.
\end{array}
\end{equation}
where $g_{0,j}^{n-4}(x_0,\cdots,x_{n-5})$ are functions
corresponding to $f_0^n(x^n)+c^n\cdot x^n$ where
$c_{n-1}=1,x_{n-1}=1, j=x_{n-4}+2x_{n-3}+4x_{n-2}$. Calculate
these functions in details in Table \ref{example-g1}.
\begin{table}[thb]
\begin{center}
\begin{tabular}{|c|c|}\hline
  $j:(x_{n-4},x_{n-3},x_{n-2})$ & $g_{0,j}^{n-4}$ \\\hline
  $(0,0,0)$ & $f_0^{n-4}+c^{n-4}\cdot x^{n-4}+1$ \\\hline
  $(1,0,0)$ & $f_0^{n-4}+c^{n-4}\cdot x^{n-4}+x_{n-6}x_{n-5}+c_{n-4}+1$ \\\hline
  $(0,1,0)$ & $f_0^{n-4}+c^{n-4}\cdot x^{n-4}+c_{n-3}+1$ \\\hline
  $(0,0,1)$ & $f_0^{n-4}+c^{n-4}\cdot x^{n-4}+c_{n-2}+1$ \\\hline
  $(1,1,0)$ & $f_0^{n-4}+c^{n-4}\cdot x^{n-4}+x_{n-6}x_{n-5}+x_{n-5}+c_{n-4}+c_{n-3}+1$ \\\hline
  $(1,0,1)$ & $f_0^{n-4}+c^{n-4}\cdot x^{n-4}+x_{n-6}x_{n-5}+c_{n-4}+c_{n-2}+1$ \\\hline
  $(0,1,1)$ & $f_0^{n-4}+c^{n-4}\cdot x^{n-4}+c_{n-3}+c_{n-2}$ \\\hline
  $(1,1,1)$ & $f_0^{n-4}+c^{n-4}\cdot x^{n-4}+x_{n-6}x_{n-5}+x_{n-5}+c_{n-4}+c_{n-3}+c_{n-2}+1$ \\\hline
\end{tabular}
\end{center}
\caption{$g_{0,j}^{n-4} (0\leq j\leq 7)$ corresponding to
$f_0^n(x^n)+c^n\cdot x^n$.} \label{example-g1}
\end{table}

By Table \ref{example-g1}, we have
\begin{equation}
\begin{array}{ll}
&\sum\limits_{0\leq j\leq 7}(-1)^{g_{0,j}^{n-4}}\\
&=((-1)+(-1)^{c_{n-2}+1}+(-1)^{c_{n-3}+1}+(-1)^{c_{n-3}+c_{n-2}})\cdot
\widehat{f_0^{n-4}}(c^{n-4}) \\
&\ \ \ \ \ +(-1)^{c_{n-4}+1}(1+(-1)^{c_{n-2}})\cdot
\widehat{f_2^{n-4}}(c^{n-4})\\
&\ \ \ \ \ +(-1)^{c_{n-4}+c_{n-3}+1}(1+(-1)^{c_{n-2}})\cdot
\widehat{f_2^{n-4}}(c^{n-4}+e_{2^{n-5}})\\

&=\left\{
\begin{array}{ll}
-2(-1)^{c_{n-3}}\widehat{f_0^{n-4}}(c^{n-4})\ &\ if\ c_{n-2}=1, \\
-2\widehat{f_0^{n-4}}(c^{n-4})-4(-1)^{c_{n-4}}\widehat{f_0^{n-5}}(c^{n-5})\ &\ if\ c_{n-2}=0, c_{n-3}=0, \\
-2\widehat{f_0^{n-4}}(c^{n-4})-4(-1)^{c_{n-4}+c_{n-5}}\widehat{f_2^{n-5}}(c^{n-5}+e_{2^{n-6}})\ &\ if\ c_{n-2}=0, c_{n-3}=1.  %
\end{array}
\right.

\end{array}
\end{equation}

So we have
\begin{equation}
\begin{array}{ll}
&\widehat{f_0^n}(c^n)\\
&=\left\{
\begin{array}{ll}
\widehat{f_0^{n-1}}(c^{n-1})-2(-1)^{c_{n-3}}\widehat{f_0^{n-4}}(c^{n-4})\ &\ if\ c_{n-2}=1, \\

\widehat{f_0^{n-1}}(c^{n-1})-2\widehat{f_0^{n-4}}(c^{n-4})-4(-1)^{c_{n-4}}\widehat{f_0^{n-5}}(c^{n-5})\ &\ if\ c_{n-2}=0, c_{n-3}=0, \\

\widehat{f_0^{n-1}}(c^{n-1})-2\widehat{f_0^{n-4}}(c^{n-4})-4(-1)^{c_{n-4}+c_{n-5}}\widehat{f_2^{n-5}}(c^{n-5}+e_{2^{n-6}})\
&\ if\ c_{n-2}=0, c_{n-3}=1.
\end{array}
\right.
\end{array}
\end{equation}

For the proof of the relation of $f_2^n$, we list the functions
$g_{2,j}^{n-4} (0\leq j\leq 7)$ corresponding to
$f_2^n(x^n)+c^n\cdot x^n$ in Table \ref{example-g2}, where
$c_{n-1}=1,x_{n-1}=1, j=x_{n-4}+2x_{n-3}+4x_{n-2}$.

\begin{table}[thb]
\begin{center}
\begin{tabular}{|c|c|}\hline
  $j:(x_{n-4},x_{n-3},x_{n-2})$ & $g_{2,j}^{n-4}$ \\\hline
  $(0,0,0)$ & $f_0^{n-4}+c^{n-4}\cdot x^{n-4}+1$ \\\hline
  $(1,0,0)$ & $f_0^{n-4}+c^{n-4}\cdot x^{n-4}+x_{n-6}x_{n-5}+c_{n-4}+1$ \\\hline
  $(0,1,0)$ & $f_0^{n-4}+c^{n-4}\cdot x^{n-4}+c_{n-3}+1$ \\\hline
  $(0,0,1)$ & $f_0^{n-4}+c^{n-4}\cdot x^{n-4}+c_{n-2}$ \\\hline
  $(1,1,0)$ & $f_0^{n-4}+c^{n-4}\cdot x^{n-4}+x_{n-6}x_{n-5}+x_{n-5}+c_{n-4}+c_{n-3}+1$ \\\hline
  $(1,0,1)$ & $f_0^{n-4}+c^{n-4}\cdot x^{n-4}+x_{n-6}x_{n-5}+c_{n-4}+c_{n-2}$ \\\hline
  $(0,1,1)$ & $f_0^{n-4}+c^{n-4}\cdot x^{n-4}+c_{n-3}+c_{n-2}+1$ \\\hline
  $(1,1,1)$ & $f_0^{n-4}+c^{n-4}\cdot x^{n-4}+x_{n-6}x_{n-5}+x_{n-5}+c_{n-4}+c_{n-3}+c_{n-2}$ \\\hline
\end{tabular}
\end{center}
\caption{$g_{2,j}^{n-4} (0\leq j\leq 7)$ corresponding to
$f_2^n(x^n)+c^n\cdot x^n$.} \label{example-g2}
\end{table}

Similarly
\begin{equation}\label{ff22-1}
\begin{array}{ll}
&\widehat{f_2^n}(c^n)\\

&=\sum\limits_{x^n:x_{n-1}=0}(-1)^{f_2^n(x^n)+c^n\cdot
x^n}+\sum\limits_{x^n:x_{n-1}=1}(-1)^{f_2^n(x^n)+c^n\cdot x^n}\\

&=\widehat{f_0^{n-1}}(c^{n-1})+\sum\limits_{0\leq j\leq
7}(-1)^{g_{2,j}^{n-4}},
\end{array}
\end{equation}

And

\begin{equation}\label{ff22-2}
\begin{array}{ll}
&\sum\limits_{0\leq j\leq 7}(-1)^{g_{2,j}^{n-4}}\\
&=((-1)+(-1)^{c_{n-2}}+(-1)^{c_{n-3}+1}+(-1)^{c_{n-3}+c_{n-2}+1})\cdot
\widehat{f_0^{n-4}}(c^{n-4}) \\
&\ \ \ \ \ +(-1)^{c_{n-4}}((-1)+(-1)^{c_{n-2}})\cdot
\widehat{f_2^{n-4}}(c^{n-4})\\
&\ \ \ \ \ +(-1)^{c_{n-4}+c_{n-3}}((-1)+(-1)^{c_{n-2}})\cdot
\widehat{f_2^{n-4}}(c^{n-4}+e_{2^{n-5}})\\

&=\left\{
\begin{array}{ll}
-2(-1)^{c_{n-3}}\widehat{f_0^{n-4}}(c^{n-4})\ &\ if\ c_{n-2}=0, \\
-2\widehat{f_0^{n-4}}(c^{n-4})-4(-1)^{c_{n-4}}\widehat{f_0^{n-5}}(c^{n-5})\ &\ if\ c_{n-2}=1\ and\ c_{n-3}=0, \\
-2\widehat{f_0^{n-4}}(c^{n-4})-4(-1)^{c_{n-4}+c_{n-5}}\widehat{f_2^{n-5}}(c^{n-5}+e_{2^{n-6}})\ &\ if\ c_{n-2}=1\ and\ c_{n-3}=1.  %
\end{array}
\right.

\end{array}
\end{equation}

By (\ref{ff22-1}) and (\ref{ff22-2}), the relation for $f_2^n$
follows.

Similarly,
$\widehat{f_1^n}(c^n)=\widehat{f_1^{n-1}}(c^{n-1})+\sum\limits_{0\leq
j\leq 7}(-1)^{g_{1,j}^{n-4}}$, where $\sum\limits_{0\leq j\leq
7}(-1)^{g_{1,j}^{n-4}}$ can be calculated as
\begin{equation}
\begin{array}{ll}
&\sum\limits_{0\leq j\leq 7}(-1)^{g_{1,j}^{n-4}}\\
&=\left\{
\begin{array}{ll}
-2(-1)^{c_{n-3}}\widehat{f_1^{n-4}}(c^{n-4})\ &\ if\ c_{n-2}=1, \\
-2\widehat{f_1^{n-4}}(c^{n-4})-4(-1)^{c_{n-4}}\widehat{f_1^{n-5}}(c^{n-5})\ &\ if\ c_{n-2}=0\ and\ c_{n-3}=0, \\
-2\widehat{f_1^{n-4}}(c^{n-4})-4(-1)^{c_{n-4}+c_{n-5}}\widehat{f_3^{n-5}}(c^{n-5}+e_{1})\ &\ if\ c_{n-2}=0\ and\ c_{n-3}=1.  %
\end{array}
\right.
\end{array}
\end{equation}

Similarly again,
$\widehat{f_3^n}(c^n)=\widehat{f_1^{n-1}}(c^{n-1}+e_1)+\sum\limits_{0\leq
j\leq 7}(-1)^{g_{3,j}^{n-4}}$, where $\sum\limits_{0\leq j\leq
7}(-1)^{g_{3,j}^{n-4}}$ can be calculated as
\begin{equation}
\begin{array}{ll}
&\sum\limits_{0\leq j\leq 7}(-1)^{g_{3,j}^{n-4}}\\
&=\left\{
\begin{array}{ll}
2(-1)^{c_{n-3}}\widehat{f_1^{n-4}}(c^{n-4}+e_1)\ &\ if\ c_{n-2}=0, \\
2\widehat{f_1^{n-4}}(c^{n-4}+e_1)+4(-1)^{c_{n-4}}\widehat{f_1^{n-5}}(c^{n-5}+e_1)\ &\ if\ c_{n-2}=1\ and\ c_{n-3}=0, \\
2\widehat{f_1^{n-4}}(c^{n-4}+e_1)+4(-1)^{c_{n-4}+c_{n-5}}\widehat{f_3^{n-5}}(c^{n-5})\ &\ if\ c_{n-2}=1\ and\ c_{n-3}=1.  %
\end{array}
\right.
\end{array}
\end{equation}

\end{proof}

Cusick and St\u{a}nic\u{a}\cite{cusick} have proved that
$wt(F_3^n(x))=2(wt(F_3^{n-2}(x))+wt(F_3^{n-3}(x)))+2^{n-3}$, i.e.
$\widehat{F_3^{n}}(0)=2(\widehat{F_3^{n-2}}(0)+\widehat{F_3^{n-3}}(0))$
(in fact it could also be verified by Lemma \ref{high0} and Lemma
\ref{high1}). The following Lemma gives more relations about
$\widehat{F_3^{n}}(0)$.

\begin{lemma}\label{F0relation}
$\widehat{F_3^{n}}(0)$ satisfies the following relationships:
\begin{equation}
\begin{array}{lll}
\widehat{F_3^{n}}(0)&=\widehat{F_3^{n-1}}(0)+2\widehat{F_3^{n-4}}(0)+4\widehat{F_3^{n-5}}(0)
&n\geq 8,\\
\widehat{F_3^{n-1}}(0)&\leq \widehat{F_3^n}(0)\leq
2\widehat{F_3^{n-1}}(0),&n\geq 7.
\end{array}
\end{equation}
\end{lemma}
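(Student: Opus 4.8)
Write $a_n := \widehat{F_3^n}(0)$. The relation recalled just before the lemma then reads $a_n = 2a_{n-2} + 2a_{n-3}$, available for all sufficiently large $n$. The plan is to derive \emph{both} assertions of the lemma from this single three-term recurrence, by elementary elimination together with the nonnegativity of the sequence $(a_n)$; no new Fourier computation is needed.

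For the first identity I would apply $a_n = 2a_{n-2} + 2a_{n-3}$ at index $n$ and substitute the same relation at index $n-2$ in the form $2a_{n-2} = 4a_{n-4} + 4a_{n-5}$, giving $a_n = 2a_{n-3} + 4a_{n-4} + 4a_{n-5}$. Then applying the relation at index $n-1$ in the form $2a_{n-3} = a_{n-1} - 2a_{n-4}$ and substituting yields exactly $a_n = a_{n-1} + 2a_{n-4} + 4a_{n-5}$. This used the known recurrence at indices $n$, $n-1$, $n-2$; the smallest index whose recurrence must be legitimate is $n-2$, which is precisely what pins the range to $n \ge 8$.

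For the inequality I would first record that $a_m \ge 0$ for every $m$: the base values $a_1=0$, $a_2=4$, $a_3=6$, $a_4=8$ are nonnegative, and since $a_m = 2a_{m-2}+2a_{m-3}$ has nonnegative coefficients, nonnegativity propagates by induction. The lower bound is then immediate from the first identity, since $a_n - a_{n-1} = 2a_{n-4}+4a_{n-5} \ge 0$. For the upper bound I would derive a companion identity in the same style: from the first identity, $2a_{n-1}-a_n = a_{n-1} - 2a_{n-4} - 4a_{n-5}$; replacing $a_{n-1} = 2a_{n-3}+2a_{n-4}$ turns this into $2a_{n-3}-4a_{n-5}$, and then $a_{n-3}=2a_{n-5}+2a_{n-6}$ collapses it to $2a_{n-1}-a_n = 4a_{n-6} \ge 0$, whence $a_n \le 2a_{n-1}$.

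The only thing needing care is the boundary. The inequality is claimed for $n \ge 7$, whereas the two identities invoke the known recurrence at indices as low as $n-3$; so for $n=7$ (and, depending on the exact index from which $a_m = 2a_{m-2}+2a_{m-3}$ is guaranteed, possibly $n=8$) I would simply verify $a_{n-1} \le a_n \le 2a_{n-1}$ by hand from $a_5=20$, $a_6=28$, $a_7=56$, $a_8=96$; note that $n=7$ is the sharp case $a_7 = 2a_6$, consistent with $2a_{n-1}-a_n = 4a_{n-6} = 4a_1 = 0$. I do not expect a genuine obstacle here: the whole lemma is linear-recurrence bookkeeping, and the single point to watch is that every use of $a_m = 2a_{m-2}+2a_{m-3}$ stays inside its range of validity, which the base-case check covers.
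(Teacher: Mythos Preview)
Your derivation of the first identity is the same elimination the paper performs: the paper writes the recurrence $a_m=2a_{m-2}+2a_{m-3}$ at $m=n,\,n-1,\,n-2$ and takes the linear combination ``first $-$ second $+$ third'', which is exactly your two substitutions repackaged.

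For the inequality your argument is correct but genuinely different from the paper's. The paper proves the upper bound $a_n\le 2a_{n-1}$ by induction: assuming $a_{s-1}\le 2a_{s-2}$ and $a_{s-2}\le 2a_{s-3}$, it combines these with $a_{s+1}=2(a_{s-1}+a_{s-2})$ and $a_s=2(a_{s-2}+a_{s-3})$ to get $a_{s+1}\le 2a_s$. You instead eliminate further to obtain the exact identity $2a_{n-1}-a_n=4a_{n-6}$ and read off the inequality from nonnegativity. Your route is cleaner and more informative (it explains why $n=7$ is the equality case), at the price of invoking the basic recurrence one index lower (at $n-3$), hence one extra hand-checked base case. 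The lower bound is handled the same way in spirit; the paper simply declares it ``obvious'' for $n\ge 4$, while you make the reason explicit via the first identity. One cosmetic point: you do not actually need the values $a_1,a_2$, since after verifying $n=7,8$ directly your identity only calls for $a_m\ge 0$ with $m\ge 3$, which follows from the tabulated values and the recurrence for $m\ge 6$.
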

\begin{proof}
For the first equation, by the recurrence relation
$\widehat{F^{n}}(0)=2(\widehat{F^{n-2}}(0)+\widehat{F^{n-3}}(0))$
, we have for all $n\geq 8$,
\begin{equation}
\begin{array}{ll}
\widehat{F_3^{n}}(0)=2(\widehat{F_3^{n-2}}(0)+\widehat{F_3^{n-3}}(0)),\\
\widehat{F_3^{n-1}}(0)=2(\widehat{F_3^{n-3}}(0)+\widehat{F_3^{n-4}}(0)),\\
2\widehat{F_3^{n-2}}(0)=4(\widehat{F_3^{n-4}}(0)+\widehat{F_3^{n-5}}(0)),
\end{array}
\end{equation}
Calculating $"the\ first\ equation\ -\ the\ second\ equation\ +\
the\ third\ equation"$, we obtain
$$\widehat{F_3^{n}}(0)=\widehat{F_3^{n-1}}(0)+2\widehat{F_3^{n-4}}(0)+4\widehat{F_3^{n-5}}(0).$$

\begin{table}[thb]
\begin{center}
\begin{tabular}{|c|c|c|c|c|c|c|c|}\hline
$n=3$ & $n=4$ & $n=5$ & $n=6$ &$n=7$ & $n=8$ & $n=9$ & $n=10$\\
\hline
$6$ & $8$ & $20$ & $28$ &$56$ & $96$ & $168$ & $304$\\
\hline
\end{tabular}
\caption{ The values of $\widehat{F_3^n}(0)$.} \label{example-F30}
\end{center}
\end{table}

It is obvious $\widehat{F^{n-1}}(0)\leq \widehat{F^n}(0)$ for all
$n\geq 4$. For the proof of $\widehat{F^n}(0)\leq
2\widehat{F^{n-1}}(0)$, we show it by induction. From Table
\ref{example-F30}, it is true for $n<7$. Assume it is true for all
$n\leq s, n,s\geq 7$, we prove it for the case $s+1$. Since
\begin{equation}
\begin{array}{ll}
\widehat{F_3^{s-1}}(0)&\leq 2\widehat{F_3^{s-2}}(0),(by\ assumption)\\
\widehat{F_3^{s-2}}(0)&\leq 2\widehat{F_3^{s-3}}(0),(by\ assumption)\\
\widehat{F_3^{s}}(0)&=2(\widehat{F_3^{s-2}}(0)+\widehat{F_3^{s-3}}(0)),\\
\widehat{F_3^{s+1}}(0)&=2(\widehat{F_3^{s-1}}(0)+\widehat{F_3^{s-2}}(0)),
\end{array}
\end{equation}
It follows from the above relationships that
$$\widehat{F_3^{s+1}}(0)\leq 2\widehat{F_3^{s}}(0).$$
\end{proof}

\begin{lemma}\label{inequations}
Let $c^n=(c_0,\cdots,c_{n-1})\in \mathbb{F}_2^n$. If $c_{1}=1$,
then $$|\widehat{f_i^n}(c^n)|\leq \frac{1}{4}\cdot
\widehat{F_3^{n+2}}(0),(0\leq i\leq 3, n\geq 9).$$
\end{lemma}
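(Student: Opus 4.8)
The plan is to prove this by strong induction on $n$, exploiting the fact that the recurrences of Lemma \ref{high0} and Lemma \ref{high1} peel off only the top coordinate $c_{n-1}$ and hence leave the coordinate $c_1$ untouched. First I would record the reformulation coming from the factoring identity of Section 2, namely $\widehat{F_3^{n+2}}(0)=\sum_{i=0}^{3}\widehat{f_i^{n}}(0)$, so that the asserted bound reads $|\widehat{f_i^n}(c^n)|\le\frac14\sum_{j=0}^{3}\widehat{f_j^n}(0)$. This makes transparent why the hypothesis $c_1=1$ cannot be dropped: at $c^n=0$ the left side is the single term $\widehat{f_i^n}(0)$, which for the largest index exceeds the average of the four, so the inequality genuinely fails there, and $c^n=0$ has $c_1=0$. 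The role of $c_1=1$ is precisely to confine the argument to a class that is closed under the reductions and on which the bound does hold.

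For the inductive step ($n\ge 14$) I would split on the value of $c_{n-1}$. If $c_{n-1}=0$, Lemma \ref{high0} writes $\widehat{f_i^n}(c^n)$ as $\pm 2$ times a sum of two terms at dimensions $n-2$ and $n-3$ (or, for $i=3$, the single term $2\,\widehat{f_1^{n-3}}$); if $c_{n-1}=1$, Lemma \ref{high1} writes it as a signed combination with coefficients $1,2,4$ of terms at dimensions $n-1,n-4,n-5$. The crucial observation is that every reduced argument on the right—whether $c^{m}$, $c^{m}+e_1$, or $c^{m}+e_{2^{\cdot}}$—still has its coordinate of index $1$ equal to $c_1=1$, since $e_1$ flips only the bottom coordinate and $e_{2^{\cdot}}$ only the top coordinate, and for the dimensions $\ge 9$ in play neither of these is coordinate $1$. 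Hence each term falls under the induction hypothesis (the reduced dimensions $n-1,\dots,n-5$ are all $\ge 9$ exactly when $n\ge 14$), and I may bound $|\widehat{f_{j}^{m}}(\cdot)|\le\frac14\widehat{F_3^{m+2}}(0)$ term by term.

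Applying the triangle inequality then reduces everything to a numerical inequality among the values $\widehat{F_3^{\bullet}}(0)$, and here the coefficients have been engineered to match exactly. For $c_{n-1}=1$ the heaviest branch gives $\frac14\bigl(\widehat{F_3^{n+1}}(0)+2\widehat{F_3^{n-2}}(0)+4\widehat{F_3^{n-3}}(0)\bigr)$, which is exactly $\frac14\widehat{F_3^{n+2}}(0)$ by the first relation of Lemma \ref{F0relation} shifted by two; for $c_{n-1}=0$ the bound is $\frac14\cdot 2\bigl(\widehat{F_3^{n}}(0)+\widehat{F_3^{n-1}}(0)\bigr)$, which is again $\frac14\widehat{F_3^{n+2}}(0)$ by the basic recurrence $\widehat{F_3^{n+2}}(0)=2(\widehat{F_3^{n}}(0)+\widehat{F_3^{n-1}}(0))$. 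The lighter branches (those missing the $4\cdot$ term, and the lone $f_3$ term of Lemma \ref{high0}) are dominated using $\widehat{F_3^{\bullet}}(0)\ge 0$ together with the monotonicity $\widehat{F_3^{m-1}}(0)\le\widehat{F_3^{m}}(0)\le 2\widehat{F_3^{m-1}}(0)$ of Lemma \ref{F0relation}. Thus the step closes with equality attained in the extremal branch, which is what pins down the constant $\frac14$.

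Finally I would dispose of the base cases $9\le n\le 13$ by direct evaluation, checking $|\widehat{f_i^n}(c^n)|\le\frac14\widehat{F_3^{n+2}}(0)$ over the finite set of $c^n$ with $c_1=1$, the right-hand sides being read off from the recurrence for $\widehat{F_3^{\bullet}}(0)$ and the table of Section 2. I expect the base cases to be the real obstacle: the inductive step is essentially forced once the recurrences are in hand, so all the genuine content—and the only place where $c_1=1$ does more than propagate itself—sits in verifying that the bound actually holds at the bottom of the induction. The secondary thing to watch is the bookkeeping, namely confirming that in every branch of Lemma \ref{high1} and every line of Lemma \ref{high0} the coordinate-$1$ value is preserved and that no reduced dimension drops below $9$ before the base range is reached, so that the induction hypothesis is always legitimately available.
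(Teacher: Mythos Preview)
Your proposal is correct and follows essentially the same approach as the paper: strong induction on $n$, splitting on $c_{n-1}$, invoking Lemmas \ref{high0} and \ref{high1} for the recurrences, and closing the step with the identities and monotonicity of Lemma \ref{F0relation}. The only organizational difference is the base range: the paper verifies (by computer, and in fact with strict inequality for all $c^n\neq 0$) the bound on the window $n\in[3,9]$ and begins the inductive step at $s=10$, whereas you verify $n\in[9,13]$ for $c_1=1$ and begin at $n=14$ so that the reduced dimensions never drop below $9$; both choices are valid, and your explicit check that the modifications by $e_1$ and $e_{2^{\cdot}}$ leave coordinate $1$ untouched is exactly the point the paper uses (more tersely) when it notes that $c_1=1$ forces all truncations to be nonzero.
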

\begin{proof}
We prove it by induction. Firstly with the help of computer, we
verify that for all $n\in [3,9],c^n\neq 0$,
$|\widehat{f_i^n}(c^n)|< \frac{1}{4}\cdot
\widehat{F_3^{n+2}}(0),(0\leq i\leq 3)$. (For example, see Table
\ref{example} for the case $n=6$. In this case
$\widehat{F_3^{n+2}}(0)=\widehat{F_3^{8}}(0)=96$, and we see that
$|\widehat{f_i^6}(c^6)|< \frac{1}{4}\cdot
\widehat{F_3^{8}}(0)=24,(0\leq i\leq 3)$). Assume the claim is
true for all $n<s$, where $n\geq 9, s\geq 10$, we now prove it is
true for $s$.

\begin{table}[thb]
\begin{center}
\begin{tabular}{|c|c|c|c|}\hline
$(0,36,28,28,4)$ & $(1,4,12,4,12)$ & $(2,12,20,4,-4)$ &
$(3,-4,-12,-4,4)$ \\ \hline $(4,12,-4,20,4)$ & $(5,-4,12,-4,-4)$ &
$(6,-12,4,-4,-4)$ & $(7,4,-12,4,4)$ \\ \hline $(8,12,20,-4,4)$ &
$(9,12,4,4,12)$ & $(10,4,-4,4,-4)$ & $(11,-12,-4,-4,4)$ \\ \hline
$(12,4,-12,-12,4)$ & $(13,-12,4,-4,-4)$ & $(14,-4,12,-4,-4)$ &
$(15,12,-4,4,4)$
\\ \hline
$(16,12,4,20,-4)$ & $(17,-4,4,-4,-12)$ & $(18,4,12,12,4)$ & $(19,4,-4,4,-4)$ \\
\hline $(20,4,4,-4,-4)$ & $(21,4,4,4,4)$ & $(22,-4,-4,-12,4)$ &
$(23,-4,-4,-4,-4)$ \\ \hline $(24,-12,-4,4,-4)$ &
$(25,4,-4,12,-12)$ & $(26,-4,-12,-4,4)$ & $(27,-4,4,-12,-4) $ \\
\hline $(28,-4,-4,12,-4)$ & $(29,-4,-4,-12,4)$ & $(30,4,4,4,4)$ &
$(31,4,4,12,-4)$ \\ \hline
$(32,4,4,12,12)$ & $(33,4,4,4,20)$ & $(34,-4,-4,4,-12)$ & $(35,-4,-4,-4,12)$ \\
\hline $(36,12,4,4,12)$ & $(37,-4,4,-4,4)$ & $(38,4,12,-4,-12)$ &
$(39,4,-4,4,12)$ \\ \hline $(40,-4,-4,12,-4)$ & $(41,-4,-4,4,4)$ &
$(42,4,4,4,4)$ & $(43,4,4,-4,-4)$ \\ \hline $(44,-12,-4,4,-4)$ &
$(45,4,-4,-4,-12)$ & $(46,-4,-12,-4,4)$ & $(47,-4,4,4,-4)$
\\ \hline
$(48,-4,-4,-12,4)$ & $(49,-4,-4,-4,12)$ & $(50,4,4,-4,-4)$ & $(51,4,4,4,20)$ \\
\hline $(52,-12,-4,-4,4)$ & $(53,4,-4,4,-4)$ & $(54,-4,-12,4,-4)$
& $(55,-4,4,-4,-12)$
\\ \hline
$(56,4,4,-12,4)$ & $(57,4,4,-4,12)$ & $(58,-4,-4,-4,-4)$ & $(59,-4,-4,4,-12)$ \\
 \hline
$(60,12,4,-4,4)$ & $(61,-4,4,4,-4)$ & $(62,4,12,4,-4)$ &
$(63,4,-4,-4,20)$ \\ \hline
\end{tabular}

\caption{$(c,\widehat{f_0^6}(c),\widehat{f_1^6}(c),\widehat{f_2^6}(c),\widehat{f_3^6}(c))$,
where $c=(c_0,\cdots,c_5)\in \mathbb{F}_2^6$ is represented by its
corresponding integer number $\sum\limits_{0\leq i\leq 5}c_i2^i$.}
\label{example}
\end{center}
\end{table}

Since $c_{1}=1$, we have
$c^n,c^{n-1},c^{n-2},c^{n-3},c^{n-4},c^{n-5}$  are all not zero
vectors.

If $c_{n-1}=0$, then by Lemma \ref{high0} and Lemma
\ref{F0relation}, we have
\begin{equation}
\begin{array}{ll}
&\left|\widehat{f_0^s}(c^s)\right|\\
&=\left|2(\widehat{f_0^{s-2}}(c^{s-2})+(-1)^{c_{n-2}}\cdot
\widehat{f_0^{s-3}}(c^{s-3}))\right| \\
&\leq \left|2(\widehat{f_0^{s-2}}(c^{s-2})\right|+2\left|
\widehat{f_0^{s-3}}(c^{s-3}))\right| \\
 &< \frac{1}{4}\cdot
(2(\widehat{F_3^{s}}(0)+\widehat{F_3^{s-1}}(0))) \\
&=\frac{1}{4}\cdot \widehat{F_3^{s+2}}(0).
\end{array}
\end{equation}

Similarly, the case for $|\widehat{f_i^n}(c^n)|< \frac{1}{4}\cdot
\widehat{F^{n+2}}(0),(i=1,2)$ can be proven.

For the case $i=3$, we have
\begin{equation}
\begin{array}{ll}
\left|\widehat{f_3^s}(c^s)\right|&=\left|2(-1)^{c_{s-2}}\cdot
\widehat{f_1^{s-3}}(c^{s-3}+e_1)\right| \\
&=\left|2\cdot \widehat{f_1^{s-3}}(c^{s-3}+e_1)\right|\\
& < \frac{1}{4}\cdot 2\widehat{F_3^{s-1}}(0) \\
&<\frac{1}{4}\cdot(
2\widehat{F_3^{s-1}}(0)+2\widehat{F_3^{s}}(0))\\
&= \frac{1}{4}\cdot\widehat{F_3^{s+2}}(0).
\end{array}
\end{equation}

If $c_{n-1}=1$, we prove the case $i=0,2$, and leave the proof for
the case $f_1^n,f_3^n$ to the reader since the recurrence forms
are similar. By Lemma \ref{high1} , for $i=0, 2$,
\begin{equation}
\begin{array}{ll}
\widehat{f_i^n}(c^n)\
&=\widehat{f_0^{n-1}}(c^{n-1})\pm2\cdot\widehat{f_0^{n-4}}(c^{n-4})), \\
\ \ \
or&=\widehat{f_0^{n-1}}(c^{n-1})\pm2\cdot\widehat{f_0^{n-4}}(c^{n-4}))\pm4\cdot\widehat{f_1^{n-5}}(c^{n-5}),
 \\
\ \ \
or&=\widehat{f_0^{n-1}}(c^{n-1})\pm2\cdot\widehat{f_0^{n-4}}(c^{n-4}))\pm4\cdot\widehat{f_1^{n-5}}(c^{n-5}+e_{2^{n-6}}).
\end{array}
\end{equation}

We prove the inequality for the first case and the second case,
while the third case is similar. If $\widehat{f_i^n}(c^n)
=\widehat{f_0^{n-1}}(c^{n-1})\pm2\cdot\widehat{f_0^{n-4}}(c^{n-4}))$,
then by Lemma \ref{F0relation} and induction,
\begin{equation}
\begin{array}{ll}
&\left|\widehat{f_i^s}(c^s)\right|\\
&\leq\left|\widehat{f_0^{s-1}}(c^{s-1})\right|+2\left|\widehat{f_0^{s-4}}(c^{s-4})\right|\\
&<\frac{1}{4}\cdot (\widehat{F_3^{s+1}}(0)+2\widehat{F_3^{s-2}}(0))\\
&<\frac{1}{4}\cdot (2\widehat{F_3^{s}}(0)+2\widehat{F_3^{s-1}}(0))\\
&=\frac{1}{4}\cdot \widehat{F_3^{s+2}}(0).
\end{array}
\end{equation}

When
$\widehat{f_i^n}(c^n)=\widehat{f_0^{n-1}}(c^{n-1})\pm2\cdot\widehat{f_0^{n-4}}(c^{n-4}))\pm4\cdot\widehat{f_1^{n-5}}(c^{n-5})
$, then by Lemma \ref{F0relation} and induction again,
\begin{equation}
\begin{array}{ll}
&\left|\widehat{f_i^s}(c^s)\right|\\
&<\frac{1}{4}\cdot (\widehat{F_3^{s+1}}(0)+2\widehat{F_3^{s-2}}(0)+4\widehat{F_3^{s-3}}(0))\\
&=\frac{1}{4}\cdot \widehat{F_3^{s+2}}(0).
\end{array}
\end{equation}

\end{proof}

\begin{theorem}
For all $c^n=(x_0,\cdots,x_{n-1})\neq 0$ and all $n\geq 3$,
$$\left|\widehat{F_3^n}(c^n)\right|< \widehat{F_3^{n}}(0).$$
\end{theorem}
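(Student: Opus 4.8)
The plan is to reduce the statement to the bound on the four sub-functions already supplied by Lemma \ref{inequations}, by means of a decomposition of $\widehat{F_3^n}(c^n)$ that refines the weight identity recorded just before Lemma \ref{high0}.

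First I would establish this decomposition for general $c^n$. Splitting the defining sum over the last two coordinates $(x_{n-2},x_{n-1})=(a,b)$, the two cyclic monomials $x_{n-2}x_{n-1}x_0$ and $x_{n-1}x_0x_1$ reduce to affine terms in the remaining variables, and $F_3^n$ restricted to $x_0,\dots,x_{n-3}$ becomes exactly $f_0^{n-2},f_2^{n-2},f_1^{n-2},f_3^{n-2}$ for $(a,b)=(0,0),(1,0),(0,1),(1,1)$ respectively; this is the same computation that produced the $c^n=0$ identity, now carrying the character factor $(-1)^{c_{n-2}a+c_{n-1}b}$ through the split. The result is
\begin{equation*}
\widehat{F_3^n}(c^n)=\widehat{f_0^{n-2}}(c^{n-2})+(-1)^{c_{n-1}}\widehat{f_1^{n-2}}(c^{n-2})+(-1)^{c_{n-2}}\widehat{f_2^{n-2}}(c^{n-2})+(-1)^{c_{n-2}+c_{n-1}}\widehat{f_3^{n-2}}(c^{n-2}),
\end{equation*}
which specializes at $c^n=0$ to the identity $\widehat{F_3^n}(0)=\sum_{i=0}^3\widehat{f_i^{n-2}}(0)$. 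The triangle inequality immediately gives $\left|\widehat{F_3^n}(c^n)\right|\le\sum_{i=0}^3\left|\widehat{f_i^{n-2}}(c^{n-2})\right|$.

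Next I would normalize $c^n$. Since $c^n\neq0$, some coordinate equals $1$, and by the rotation-symmetry lemma following Definition \ref{fourier} the value $\left|\widehat{F_3^n}(c^n)\right|$ is invariant under cyclic shifts of $c^n$; choosing the shift that carries a nonzero coordinate into position $1$, I may assume $c_1=1$. For $n\ge5$ the index $1$ lies among the first $n-2$ coordinates, so the truncation $c^{n-2}$ again has its entry in position $1$ equal to $1$ and is in particular nonzero. Applying Lemma \ref{inequations} with dimension $n-2$ — whose proof in fact delivers the strict bound $\left|\widehat{f_i^{n-2}}(c^{n-2})\right|<\frac14\widehat{F_3^{n}}(0)$ for every admissible argument — to each of the four terms and summing yields $\left|\widehat{F_3^n}(c^n)\right|<4\cdot\frac14\widehat{F_3^n}(0)=\widehat{F_3^n}(0)$, as required.

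Finally I would dispose of the degenerate small cases $n=3,4$, where the truncation $c^{n-2}$ has dimension below the range of Lemma \ref{inequations}, by direct evaluation as in the tables; the entries of Table \ref{example-F30} together with the explicit spectra settle these instantly. I expect the only delicate point internal to this argument to be the sign bookkeeping in the decomposition and the check that normalizing to $c_1=1$ keeps $c^{n-2}$ within the hypotheses of Lemma \ref{inequations}; all of the genuine quantitative difficulty has already been absorbed into Lemmas \ref{high0}–\ref{inequations}, so the theorem itself is a short deduction from them.
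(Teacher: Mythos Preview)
Your proposal is correct and follows essentially the same route as the paper: the same four-term decomposition of $\widehat{F_3^n}(c^n)$ in terms of the $\widehat{f_i^{n-2}}(c^{n-2})$, the same use of rotation symmetry to normalize $c_1=1$, and the same appeal to Lemma~\ref{inequations} followed by the triangle inequality. The only organizational difference is the cutoff for the base cases: the paper verifies $n\le 10$ directly and then invokes Lemma~\ref{inequations} as stated (which needs $n-2\ge 9$), whereas you push the direct check down to $n\le 4$ by invoking the computer verification recorded \emph{inside} the proof of Lemma~\ref{inequations} for $n-2\in[3,9]$; either way the argument closes.
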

\begin{proof}
For the few cases $n\leq 10$, we have the correctness by the
computer's computation results. Now assume $n>10$.

Since $c^n\neq 0$, by Lemma \ref{rots},
$\widehat{F_3^n}(x_0,\cdots,x_{n-1})=\widehat{F_3^n}(x_j,\cdots,x_{n-j-1})$
for all $j\in [0,n-1]$. Thus we assume $c_{1}=1$. By Lemma
\ref{inequations}, we have
\begin{equation*}
\begin{array}{ll}
&\left|\widehat{F_3^n}(c^n)\right|\\
&=\left|\widehat{f_0^{n-2}}(c^{n-2})+(-1)^{c_{n-2}}\cdot
\widehat{f_2^{n-2}}(c^{n-2}) +(-1)^{c_{n-1}}\cdot
\widehat{f_1^{n-2}}(c^{n-2})+(-1)^{c_{n-2}+c_{n-1}}\cdot
\widehat{f_3^{n-2}}(c^{n-2})\right| \\

&\leq\left|\widehat{f_0^{n-2}}(c^{n-2})\right|+\left|
\widehat{f_2^{n-2}}(c^{n-2})\right| +\left|
\widehat{f_1^{n-2}}(c^{n-2})\right|+\left|
\widehat{f_3^{n-2}}(c^{n-2})\right| \\

&<\frac{1}{4}\cdot (\widehat{F_3^{n}}(0)+\widehat{F_3^{n}}(0)+\widehat{F_3^{n}}(0)+\widehat{F_3^{n}}(0))\\
&=\widehat{F_3^{n}}(0).
\end{array}
\end{equation*}

\end{proof}

\section{Conclusion}
 In this paper we prove the conjecture proposed in \cite{cusick},
 i.e. the nonlinearity of $F_3^n(x_0,\cdots,x_{n-1})$ is the same as its
 weight.  Recently Cusick remarked that computer's results imply that the Conjecture
may be extended to RSBF with SANF $x_0x_ax_b(b>a>0)$ in the case
of odd $n$. However it seems difficult to prove that. It is
interesting to note that it has been proved in \cite{kim} that the
nonlinearity of $F_2^n(x_0,\cdots,x_{n-1})=\sum\limits_{0\leq
i\leq n-1}x_ix_{i+s(mod\ n)}$ is the same as its weight if
$\frac{n}{gcd(n,s)}$ is even. These properties show that rotation
symmetric Boolean functions have nice cryptographic applications.
Whether higher degree RSBFs have these properties is an
interesting topic for further research.


\begin{thebibliography}{99}

\bibitem{cusick} Cusick T.W. and St\u{a}nic\u{a} P.: Fast evaluation,weights and nonlinearity of rotation-symmetric
functions, \textit{Discrete Mathematics}, 258(2002), 289-301.

\bibitem{filiol} Filiol E. and Fontaine C.: Highly nonlinear
balanced Boolean functions with a good correlation-immunity. In
Advances in \textit{Cryptology -EUROCRYPT'98}, Springer-Verlag,
1998.

\bibitem{pieprzyk} Pieprzyk J. and Qu C.X.: Fast hashing and rotation symmetric
functions. \textit{Journal of Universal Computer Science},
5(1),1999,20-31.

\bibitem{marial} Mariai S., Shimoyama T. and Kaneko T.: Higher order differential attack using
chosen higher order differences, in: \textit{Selected Areas in
Cryptography-SAC'98}, Lectures Notes in Computer Science. Vol.
1556, Springer, Berlin, 1999, 106-117.

\bibitem{stanica} St\u{a}nic\u{a} P. and Maitra S.: Rotation symmetric
Boolean functions- Count and cryptographic properties.
\textit{Discrete Applied Mathematics}, 156,2008, 1567-1580.

\bibitem{kavut} Kavut S., Maitra S. and Yucel M.D.: Search for
Boolean functions with excellent profiles in the rotation
symmetric class, \textit{IEEE Transaction on Information Theory},
53(5),2007, 1743-1751.

\bibitem{kim} Kim H., Park S. and Hahn S.G.: On the weight and
nonlinearity of homogeneous rotation symmetric Boolean functions
of degree 2. \textit{Discrete Applied Mathematics}, 157,2009,
428-432.

\bibitem{ciungu} Ciungu L. C.: Cryptographic Boolean functions:
Thus-Morse sequences, weight and nonlinearity. PHD thesis, the
University at Buffalo, State University of New York, 2010.03.



\end{thebibliography}
\end{document}